\documentclass[journal]{IEEEtran}

\ifCLASSINFOpdf
\else
   \usepackage[dvips]{graphicx}
\fi
\usepackage{url}
\usepackage{graphicx}
\usepackage{amsmath,amsfonts, amssymb}
\usepackage{xcolor}
\usepackage{bbm}
\usepackage{hyperref}

\newtheorem{theorem}{Theorem}

\newtheorem{remark}{Remark}
\newtheorem{definition}{Definition}

\DeclareMathOperator*{\argmin}{arg\,min}

\newcommand{\Min}{{\mathrm{Min}}}
\newcommand{\reg}{{\mathrm{reg}}}
\newcommand{\HT}{{\mathrm{HT}}}
\newcommand{\HE}{{\mathrm{HE}}}

\begin{document}
\title{Optimal Regret Exponents for Bayesian Statistical Decision Problems}

\author{
    Hyun-Young~Park,~\IEEEmembership{Graduate Student Member,~IEEE,}
    and Si-Hyeon~Lee,~\IEEEmembership{Senior Member,~IEEE}
    \thanks{
        This work was supported 
        in part by IITP (Institute of Information \& communications Technology Planning \& Evaluation) under 6G·Cloud Research and Education Open Hub (IITP-2026-RS-2024-00428780) grant % Funding 1 (Cloud)
        and in part by IITP through the Next Generation Semantic Communication Network Research Center (RS-2024-00398948) grant, % Funding 2 (Semantic)
        both funded by the Korea government (MSIT).
    }
    \thanks{
        The authors are with the School of Electrical Engineering, Korea Advanced Institute of Science and Technology (KAIST), Daejeon 34141, South Korea (e-mail: phy811@kaist.ac.kr, sihyeon@kaist.ac.kr).
        \emph{(Corresponding author: Si-Hyeon Lee.)}
    }
}

% \markboth{Journal of \LaTeX\ Class Files, Vol. 14, No. 8, August 2015}
% {Shell \MakeLowercase{\textit{et al.}}: Bare Demo of IEEEtran.cls for IEEE Journals}
\maketitle

\begin{abstract}
We study finite-state finite-action Bayesian statistical decision problems. While exact error-exponent characterizations are known for several special cases, including hypothesis testing and hypothesis exclusion, the asymptotic behavior of the optimal Bayes regret is largely unknown for general decision problems. In this paper, we show that the optimal regret always decays exponentially fast and characterize its exact exponent for arbitrary loss functions. The exponent is given by the minimum multivariate Chernoff information over the minimal incompatible subsets of states, where an incompatible subset is a collection of states for which no single action is optimal for all states in the subset. Our result recovers the classical pairwise-minimum Chernoff exponent for symmetric multiple hypothesis testing and the multivariate Chernoff exponent for hypothesis exclusion, while also yielding, to the best of our knowledge, the first exact exponent characterization for list hypothesis testing.
\end{abstract}

\begin{IEEEkeywords}
    Bayesian statistical decision, hypothesis testing, hypothesis exclusion, Chernoff information, regret 
\end{IEEEkeywords}

\IEEEpeerreviewmaketitle

\section{Introduction}
\IEEEPARstart{O}{ne} of the fundamental results in information theory is the characterization of the error exponent for hypothesis testing \cite{coverElementsInformationTheory2006, hellmanProbabilityErrorEquivocation1970, kanayaAsymptoticsPosteriorEntropy1995, nielsenInformationGeometricCharacterizationChernoff2013}. In symmetric Bayesian hypothesis testing, one observes i.i.d. samples generated according to one of several candidate distributions and seeks to identify the true hypothesis with the smallest possible error probability. It is well known that the optimal error probability decays exponentially fast with the number of samples, and that the corresponding exponent is given by the minimum pairwise Chernoff information \cite{hellmanProbabilityErrorEquivocation1970, kanayaAsymptoticsPosteriorEntropy1995, nielsenInformationGeometricCharacterizationChernoff2013}. The resulting characterization has become a fundamental tool in statistical signal processing \cite{leeGeneralizedChernoffInformation2012, qiComplexityOutageProbability2013, chenModelingInformationFlows2012}. 
A recently studied counterpart is hypothesis exclusion \cite{mishraOptimalErrorExponents2024, jiConverseBoundsQuantum2025, cavesConditionsCompatibilityQuantumstate2002, barrettNoensuremathpsiEpistemicModel2014}, also known as state anti-distinguishability or state elimination. Rather than identifying the true hypothesis, the objective is to output a hypothesis that is not the true one. This problem has attracted renewed attention due to several applications in quantum information theory \cite{cavesConditionsCompatibilityQuantumstate2002, barrettNoensuremathpsiEpistemicModel2014}. Surprisingly, despite their opposite objectives, hypothesis exclusion exhibits an asymptotic behavior similar to that of hypothesis testing: the optimal error probability again decays exponentially fast, with exponent given by the multivariate Chernoff information of all hypotheses \cite{mishraOptimalErrorExponents2024}.

Both problems can be viewed as special cases of the classical framework of statistical decision theory \cite{bergerStatisticalDecisionTheory1985, dulekOptimalDecisionRules2020}. In a statistical decision problem, the goal is to choose an action after observing the samples so as to minimize the expected loss.
Unlike hypothesis testing, where each state has a unique optimal action, a general decision problem may exhibit a richer optimality structure: a state may admit multiple optimal actions, and a single action may be optimal for multiple states. Hypothesis exclusion is one example of such a many-to-many structure, while in general the pattern of state-action optimality relations can be arbitrary. 
Such situations naturally arise when exact state identification is unnecessary.
A notable example is list hypothesis testing \cite{asadikangarshahiMinimumProbabilityError2023}, where the decision rule outputs a list of candidate states rather than a single hypothesis. 
More broadly, decision-theoretic formulations arise in numerous signal processing applications, including sensor scheduling \cite{krishnamurthyStructuredThresholdPolicies2007} and beamforming design \cite{seoTrainingBeamSequence2016}. 
Yet, despite the central role of statistical decision theory, the asymptotic behavior of the optimal Bayes regret for general finite-state finite-action decision problems remains largely unexplored.

In this paper, we completely characterize the asymptotic decay rate of the optimal Bayes regret for finite-state finite-action Bayesian decision problems with conditionally i.i.d. observations. Our main result establishes that the optimal regret always decays exponentially fast and identifies the exact exponent. Specifically, the exponent is given by the minimum Chernoff information over the minimal incompatible subsets of states, where a subset is called incompatible if no single action is optimal for all states in the subset. This characterization immediately recovers the classical error exponents for both hypothesis testing and hypothesis exclusion, and further yields, to the best of our knowledge, the first exact error-exponent characterization for list hypothesis testing \cite{asadikangarshahiMinimumProbabilityError2023}.

Beyond the exponent characterization itself, our analysis uncovers a previously hidden combinatorial structure underlying statistical decision problems. We show that incompatible subsets correspond precisely to transversals of a hypergraph induced by the loss function, which we call the regret support hypergraph. Leveraging the classical bottleneck theorem \cite{edmondsBottleneckExtrema1970}, we reduce the regret analysis to a collection of hypothesis exclusion problems restricted to incompatible subsets. This connection between statistical decision theory, information-theoretic error exponents, and hypergraph combinatorics forms the technical core of our proof.

\paragraph*{Notations}
For a finite set $B$,  $\Delta_B:=\{p:B\rightarrow [0,1]: \sum_{b \in B} p(b)=1\}$ denotes the set of probability mass functions (PMFs) on $B$.
For a collection of subsets $\mathcal{S}\subset 2^B$ of a finite set $B$, $\Min(\mathcal{S})$ denotes the collection of all minimal elements of $\mathcal{S}$, that is, $\Min(\mathcal{S}):=\{S\in\mathcal{S}: \forall S'\in \mathcal{S}\backslash\{S\}, S' \not\subset S \}$.
For a set $B$ and $n \in \mathbb{N}$, $\binom{B}{n}$ denotes the collection of all subsets of $B$ with cardinality $n$.
An $n$-tuple $(x_1,\cdots,x_n) \in \mathcal{X}^n$ of elements in the same set $\mathcal{X}$ is denoted by $x^n$.
For a measure $\mu$ on a measurable space $\mathcal{X}$ and $n \in \mathbb{N}$, $\mu^{\otimes n}$ denotes the $n$-fold product measure of $\mu$ on $\mathcal{X}^n$.
Also, for a function $g:\mathcal{X} \rightarrow \mathbb{R}$, the function $g^{\otimes n}:\mathcal{X}^n \rightarrow \mathbb{R}$ is given by $g^{\otimes n}(x^n)=\prod_{i=1}^{n}g(x_i)$.
All logarithms are to base $e$.

\section{Problem Formulation}
We consider the finite-state finite-action Bayesian statistical decision problem \cite{bergerStatisticalDecisionTheory1985} $(\mathcal{H},\mathcal{A}, \ell,P,\eta)$, where
\begin{itemize}
\item $\mathcal{H}$ is a non-empty finite set of states (or hypotheses),
\item $\mathcal{A}$ is a non-empty finite set of actions,
\item $\ell:\mathcal{H}\times \mathcal{A} \rightarrow \mathbb{R}$ is a loss function,
\item $P=(P_h)_{h \in \mathcal{H}}$ is a collection of distinct probability distributions on a common measurable space $\mathcal{X}$, and
\item $\eta \in \Delta_{\mathcal{H}}$ is a prior distribution satisfying $\eta(h)>0$ for all $h \in \mathcal{H}$.
\end{itemize}
The unknown state $H$ is drawn according to $\eta$, and conditioned on $H=h$, the observations $X_1,X_2,\ldots$ are generated i.i.d. according to $P_h$. Given $n$ observations $X^n=(X_1,\ldots,X_n)$, a decision rule $f:\mathcal{X}^n\to\mathcal{A}$ selects an action $f(X^n)$.

The \textbf{(Bayes) risk} of a decision rule $f$ is  the expected loss $\mathbb{E}\left[\ell(H, f(X^n))\right]$, and the \textbf{optimal (Bayes) risk} is defined as the smallest risk over all possible decision rules,
\begin{align}
    R_n(P, \eta,\ell) := \inf_{f:\mathcal{X}^n\rightarrow \mathcal{A}} \mathbb{E}\left[\ell(H, f(X^n))\right].
\end{align}

The \textbf{regret loss function} $\ell^\reg$ is defined as
\begin{align}
    \ell^\reg(h,a):=\ell(h,a)-\ell^*(h), 
\end{align}
where $\ell^*(h):=\min_{a \in \mathcal{A}} \ell(h,a)$ is the minimum attainable loss under state $h$.  By construction, $\ell^{\reg}(h,a)\ge 0$ for all $(h,a)\in\mathcal H\times\mathcal A$, and $\min_{a\in\mathcal A}\ell^{\reg}(h,a)=0$ for every $h\in\mathcal H$. 

Moreover, for any decision rule $f$, 
\begin{align}
    \mathbb{E}\left[\ell(H, f(X^n))\right] \!= \!\!\sum_{h \in \mathcal{H}} \eta(h) \ell^*(h) +  \mathbb{E}\left[\ell^{\reg}(H, f(X^n))\right],
\end{align}
which implies  
\begin{align}
    R_n(P, \eta,\ell) = \sum_{h \in \mathcal{H}} \eta(h) \ell^*(h) +  R_n(P, \eta,\ell^{\reg}).
\end{align}
Therefore, $\ell$ and $\ell^{\reg}$ induce the same set of optimal decision rules. Accordingly, we refer to $R_n(P,\eta,\ell^{\reg})$ as the \textbf{optimal (Bayes) regret}. Our objective is to characterize the asymptotic decay rate of the optimal regret as $n\to\infty$.

\section{Preliminaries and Previous Results}
\subsection{Multivariate Chernoff Information}
The following information measure plays an important role in our analysis. 
\begin{definition}[\cite{kanayaAsymptoticsPosteriorEntropy1995, mishraOptimalErrorExponents2024}]\label{def:multChernoffInf}
    Let $\{P_h\}_{h \in \mathcal{H}}$ be a finite set of probability measures on the same measurable space $\mathcal{X}$.
    The \textbf{(multivariate) Chernoff information} is defined as
    \begin{multline}
        C(\{P_h\}_{h \in \mathcal{H}})\\
        :=-\log \inf_{s\in\Delta_{\mathcal{H}}}\int \left(\prod_{h\in\mathcal{H}}(p_h(x))^{s(h)} \right)d\mu(x),
    \end{multline}
    where $\mu$ is a measure on $\mathcal{X}$ dominating all $P_h$ and $p_h=\frac{dP_h}{d\mu}$.
    Also, in the integrand, we adopt the convention $0^0=0$.
\end{definition}

\subsection{Hypothesis Testing}
One fundamental example of a statistical decision problem is symmetric multiple hypothesis testing \cite{kanayaAsymptoticsPosteriorEntropy1995, bergerStatisticalDecisionTheory1985}. The goal is to estimate the true state so as to minimize the probability that the estimate differs from the true state. It can be formulated as a statistical decision problem with $\mathcal{A}=\mathcal{H}$ and the zero-one loss function defined as {$\ell_{\HT(\mathcal{H})}(h,a):=\mathbbm{1}(h \neq a)$}.
Note that $\ell_{\HT(\mathcal{H})}^* \equiv 0$ and $\ell_{\HT(\mathcal{H})}^\reg = \ell_{\HT(\mathcal{H})}$, and hence the optimal risk coincides with the optimal regret. 
In this case, it is well-known that \cite{kanayaAsymptoticsPosteriorEntropy1995} the optimal risk converges to $0$ exponentially fast, with exponent equal to the pairwise-minimum Chernoff information, i.e.,
\begin{multline}
    \lim_{n \rightarrow \infty}\! -\frac{1}{n} \log R_n(P, \eta,\ell_{\HT(\mathcal{H})}) \!
    =\!\! \min_{h,h'\in \mathcal{H}:h\neq h'}\! C(\{P_h, P_{h'}\}). \label{eq:errExpHT}
\end{multline}

Note that the exponent does not depend on the prior distribution $\eta$.

\subsection{Hypothesis Exclusion}
Recently, the hypothesis exclusion problem has attracted considerable attention \cite{mishraOptimalErrorExponents2024, jiConverseBoundsQuantum2025}. In contrast to conventional hypothesis testing, the goal is to select a state that is unlikely to be true. The objective is therefore to minimize the probability that the selected state is the true state. 
It can be formulated as the statistical decision problem with $\mathcal{A}=\mathcal{H}$ and the loss function $\ell_{\HE(\mathcal{H})}(h,a):=\mathbbm{1}(h = a)$.
Again, $\ell_{\HE(\mathcal{H})}^* \equiv 0$ and $\ell_{\HE(\mathcal{H})}^\reg = \ell_{\HE(\mathcal{H})}$.
It has been shown in \cite[Theorem 6]{mishraOptimalErrorExponents2024} that the optimal risk converges to $0$ exponentially fast, with exponent equal to the multivariate Chernoff information of all hypotheses.
In other words,
\begin{align}
    \lim_{n \rightarrow \infty} -\frac{1}{n} \log R_n(P, \eta,\ell_{\HE(\mathcal{H})}) = C(\{P_h\}_{h \in \mathcal{H}}). \label{eq:errExpHE}
\end{align}

Similar to the hypothesis testing problem, the exponent does not depend on the prior distribution $\eta$.

\subsection{Hypergraph Transversals and Bottleneck Theorem}\label{subsec:transversalAndBottleneckThm}
The proof of our main result relies on the notion of a \textbf{transversal} of a hypergraph \cite{brettoHypergraphTheoryIntroduction2013, edmondsBottleneckExtrema1970, diestelGraphTheory2017}, also known as a blocking set or vertex cover, and on the classical \textbf{bottleneck theorem} for transversals \cite{edmondsBottleneckExtrema1970}.
We first introduce the notion of a transversal and the associated notation.\begin{definition}
    Let $G=(\mathcal{V},\mathcal{E})$ be a hypergraph with a vertex set $\mathcal{V}$ and an edge set $\mathcal{E} \subset 2^{\mathcal{V}}$.
    A set of vertices $T \subset \mathcal{V}$ is called a \textbf{transversal} of $(\mathcal{V},\mathcal{E})$ if it intersects every edge, that is, $T \cap E \neq \emptyset$ for every $E \in \mathcal{E}$.
    The set of all transversals is denoted by $\tilde{\mathcal{T}}(G)$, and we let $\mathcal{T}(G):=\Min(\tilde{\mathcal{T}}(G))$.
\end{definition}

We next state the bottleneck theorem.
\begin{theorem}[Bottleneck theorem \cite{edmondsBottleneckExtrema1970}]\label{thm:bottleneckThm}
    Let $G=(\mathcal{V},\mathcal{E})$ be a hypergraph, where $\mathcal{V}$ is a finite set and $\mathcal{E}\subset 2^{\mathcal{V}} \backslash\{\emptyset\}$, $\mathcal{E} \neq \emptyset$.
    Then, for any function $f:\mathcal{V}\rightarrow \mathbb{R}$, we have
    \begin{align}
        \min_{E \in \mathcal{E}}\max_{v \in E} f(v) = \max_{T\in \mathcal{T}(G)} \min_{v \in T} f(v). \label{eq:bottleneckThm}
    \end{align}
\end{theorem}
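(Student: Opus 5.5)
We prove the two inequalities in \eqref{eq:bottleneckThm} separately. Throughout we use the fact that $\mathcal{V}$ itself is a transversal, because $\mathcal{E}\neq\emptyset$ and every edge is a nonempty subset of $\mathcal{V}$; hence $\tilde{\mathcal{T}}(G)\neq\emptyset$. Since $\mathcal{V}$ is finite, every transversal contains a minimal one, so $\mathcal{T}(G)\neq\emptyset$. Minimal transversals are nonempty, since the empty set meets no edge. Both sides of \eqref{eq:bottleneckThm} are therefore well-defined finite quantities.

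\emph{Step 1 ($\ge$).} Fix any $T\in\mathcal{T}(G)$ and any $E\in\mathcal{E}$. Because $T$ is a transversal, there is some $v_0\in T\cap E$. This gives
\[
\min_{v\in T}f(v)\le f(v_0)\le \max_{v\in E}f(v).
\]
Taking the minimum over $E$ and then the maximum over $T$ yields
\[
\max_{T\in\mathcal{T}(G)}\min_{v\in T} f(v)\le \min_{E\in\mathcal{E}}\max_{v\in E}f(v).
\]

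\emph{Step 2 ($\le$).} Let $\lambda:=\min_{E\in\mathcal{E}}\max_{v\in E}f(v)$, and consider the superlevel set $T_\lambda:=\{v\in\mathcal{V}: f(v)\ge\lambda\}$.

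First, $T_\lambda$ is a transversal. For every $E\in\mathcal{E}$ we have $\max_{v\in E}f(v)\ge\lambda$, so some vertex of $E$ lies in $T_\lambda$.

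Next, choose a minimal transversal $T\subset T_\lambda$, which exists by finiteness. Every $v\in T$ satisfies $f(v)\ge\lambda$, so $\min_{v\in T}f(v)\ge\lambda$. Hence
\[
\max_{T'\in\mathcal{T}(G)}\min_{v\in T'}f(v)\ge\lambda,
\]
which is the reverse inequality.

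The only delicate point is in Step 2. We must pass from an arbitrary transversal to a minimal one without lowering the minimum. This holds because shrinking a set can only increase, or leave unchanged, the minimum of $f$ over it. Combining Steps 1 and 2 gives the claimed equality \eqref{eq:bottleneckThm}.
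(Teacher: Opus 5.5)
Your proof is correct and complete. Step 1 is the weak-duality bound, valid for every transversal. Step 2 observes that the superlevel set $\{v : f(v)\ge\lambda\}$ is a transversal and shrinks it to a minimal one by finiteness, which gives the reverse inequality. The paper gives no proof of its own and simply cites Edmonds and Fulkerson; your threshold argument is essentially the standard proof of the min--max equality for a clutter and its blocker in that reference.
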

The term \emph{bottleneck theorem} originates from its applications to duality results for a variety of bottleneck optimization problems, including a minimax analogue of the max-flow min-cut theorem \cite{edmondsBottleneckExtrema1970}.

\section{Main Result}
Our main result characterizes the asymptotic behavior of the optimal regret for general finite-state finite-action Bayesian decision problems. Specifically, we prove that the optimal regret decays exponentially fast to zero and identify the exact exponent governing this decay. To this end, we introduce the notion of an \textbf{incompatible} subset of states.

\begin{definition}
Let a loss function $\ell:\mathcal{H}\times\mathcal{A}\to\mathbb{R}$ be given. A subset $T\subset\mathcal H$ is said to be \textbf{$\ell$-incompatible} if no action $a\in\mathcal A$ simultaneously attains the minimum loss for all states in $T$, i.e., $\ell(h,a)=\ell^*(h), \forall h \in T$ fails  for every  $a \in \mathcal{A}$. Let  $\tilde{\mathcal{T}}_\ell$ denote the collection of all $\ell$-incompatible subsets, and define $\mathcal{T}_\ell:=\Min(\tilde{\mathcal{T}}_\ell)$.
\end{definition}
Intuitively, a subset $T$ is $\ell$-incompatible if there is no action that is simultaneously optimal for all states in $T$. Since $\ell(h,a)=\ell^*(h)$ if and only if $\ell^{\reg}(h,a)=0$, an equivalent characterization is that no action incurs zero regret for every state in $T$.

Alternatively, $\ell$-incompatible sets admit the following hypergraph interpretation. 
For each action $a \in \mathcal{A}$, define the corresponding \emph{regret support} as
\begin{align}
    S_\ell(a):=\{h \in \mathcal{H}:\ell^\reg(h,a)>0\}, \label{eq:defRegretSupp}
\end{align}
and let 
\begin{align}
    \mathcal{S}_\ell:=\{S_\ell(a): a \in \mathcal{A}\}. \label{eq:defRegretSuppCollection}
\end{align}
Define the hypergraph
\begin{align}
G_\ell:=(\mathcal{H}, \mathcal{S}_\ell),
\end{align}
which we call the \emph{regret support hypergraph}. It follows directly from the definitions that a subset of states is $\ell$-incompatible if and only if it is a transversal of $G_\ell$. Consequently,
\begin{align}
    \tilde{\mathcal{T}}_\ell = \tilde{\mathcal{T}}(G_\ell),\quad \mathcal{T}_\ell = \mathcal{T}(G_\ell).
\end{align}

\begin{remark}
 If $\mathcal{H}$ itself is not $\ell$-incompatible, there exists an action $a^*\in\mathcal A$ such that $\ell(h,a^*)=\ell^*(h)$ for every $h\in\mathcal H$.  Then, the optimal regret satisfies
$R_n(P, \eta,\ell^{\reg})=0$
for every $P$, $\eta$, and $n$, achieved by the constant decision rule $f\equiv a^*$.
\end{remark}
\begin{remark}
    No singleton subset is $\ell$-incompatible.
\end{remark}

We now state the main result. Its proof is provided in Section~\ref{sec:proofMainResult}.
\begin{theorem}\label{thm:MainThmErrExp}
   Let $(\mathcal{H},\mathcal{A},\ell,P,\eta)$ be a Bayesian decision problem, and suppose that $\mathcal{H}$ is $\ell$-incompatible. Then the optimal regret decays exponentially fast to zero as the number of observations $n$ tends to infinity. Furthermore, the optimal decay exponent is equal to the minimum Chernoff information among all minimal $\ell$-incompatible subsets. Specifically,
    \begin{align}
        \lim_{n \rightarrow \infty} -\frac{1}{n} \log R_n(P, \eta, \ell^\reg)
        =\min_{T \in \mathcal{T}_\ell} C(\{P_h\}_{h \in T}).
    \end{align}
\end{theorem}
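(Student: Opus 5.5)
We prove matching achievability (upper bound on regret) and converse (lower bound), with the bottleneck theorem (Theorem~\ref{thm:bottleneckThm}) joining the two. Write $C^\star:=\min_{T\in\mathcal{T}_\ell} C(\{P_h\}_{h\in T})$. Since $\mathcal{H}$ is $\ell$-incompatible, $\mathcal{T}_\ell\neq\emptyset$. Each $T\in\mathcal{T}_\ell$ has $|T|\ge 2$ by the singleton remark, and the $P_h$ are distinct, so $C(\{P_h\}_{h\in T})>0$.

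\textbf{Converse.} Fix $T\in\mathcal{T}_\ell$. Let $c:=\min\{\ell^\reg(h,a):\ell^\reg(h,a)>0\}>0$ and $\eta_{\min}:=\min_h\eta(h)>0$. For any rule $f$, restricting the expectation to $H\in T$ gives
\[
\mathbb{E}[\ell^\reg(H,f(X^n))]\ge c\,\eta_{\min}\sum_{h\in T}P_h^{\otimes n}\bigl(h\in S_\ell(f(X^n))\bigr).
\]
Because $T$ is incompatible, every action $a$ has some $h\in T\cap S_\ell(a)$. Choose such an $h$ and call it $g(a)$; then $g\circ f$ is a hypothesis-exclusion rule on $T$. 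Its error event $\{g(f(X^n))=h\}$ under $P_h$ is contained in $\{h\in S_\ell(f(X^n))\}$. Hence the regret is at least $c\,\eta_{\min}|T|$ times the uniform-prior exclusion error on $T$. By \eqref{eq:errExpHE} applied to $(P_h)_{h\in T}$, this has exponent $C(\{P_h\}_{h\in T})$. Taking the best $T$ yields $\limsup_n -\frac1n\log R_n\le C^\star$.

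\textbf{Achievability.} Use the Bayes rule $f^*(x^n)=\arg\min_a\sum_h\eta(h)p_h^{\otimes n}(x^n)\ell^\reg(h,a)$. Its regret is
\[
\int\min_a\sum_{h\in S_\ell(a)}\eta(h)\ell^\reg(h,a)p_h^{\otimes n}\,d\mu^{\otimes n}\le L\int\min_{a}\max_{h\in S_\ell(a)}p_h^{\otimes n}\,d\mu^{\otimes n},
\]
where $L:=|\mathcal{H}|\max\ell^\reg$. If some $S_\ell(a)$ is empty the regret is zero, so assume every edge is nonempty.

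Apply Theorem~\ref{thm:bottleneckThm} pointwise in $x^n$, with $f(h)=p_h^{\otimes n}(x^n)$ on the hypergraph $G_\ell$:
\[
\min_a\max_{h\in S_\ell(a)}p_h^{\otimes n}=\max_{T\in\mathcal{T}_\ell}\min_{h\in T}p_h^{\otimes n}\le\sum_{T\in\mathcal{T}_\ell}\min_{h\in T}p_h^{\otimes n}.
\]
For each $T$ and any $s\in\Delta_T$, we have $\min_{h\in T}p_h^{\otimes n}\le\prod_{h\in T}(p_h^{\otimes n})^{s(h)}=\prod_{i=1}^n\prod_{h\in T}p_h(x_i)^{s(h)}$. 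Integrating, the bound factorizes into $\bigl(\int\prod_{h\in T}p_h^{s(h)}d\mu\bigr)^n$. The $0^0=0$ convention causes no trouble, because the min is already zero wherever some $p_h$ vanishes. Optimizing over $s$ gives $e^{-nC(\{P_h\}_{h\in T})}$. If the infimum is not attained, take $s$ within $\epsilon$ of it and then let $\epsilon\to0$. Summing over the finitely many $T$ gives $R_n\le L|\mathcal{T}_\ell|e^{-nC^\star}$, so $\liminf_n -\frac1n\log R_n\ge C^\star$, and the limit exists.

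\textbf{Main obstacle.} The key step is the pointwise bottleneck identity, which converts the action-indexed min–max into a max over minimal transversals and thus decouples the problem into per-$T$ exclusion-type integrals. It must be checked that Theorem~\ref{thm:bottleneckThm}'s hypotheses hold, namely that the edges are nonempty and $\mathcal{S}_\ell\neq\emptyset$, and that $\mathcal{T}(G_\ell)=\mathcal{T}_\ell$. The measure-theoretic details in the Chernoff bound (the $0^0$ convention and whether the infimum is attained) are routine.
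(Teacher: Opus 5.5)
Your proof is correct. It shares the paper's key step, the pointwise bottleneck identity on $G_\ell$, but organizes the two directions differently.

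The paper derives a single two-sided sandwich from the Bayes-optimal integrand: $m C_n''\le R_n\le |\mathcal{H}|2^{|\mathcal{H}|}M C_n''$, with $C_n''=\max_{T\in\mathcal{T}_\ell}\int\min_{h\in T}\eta(h)p_h^{\otimes n}\,d\mu^{\otimes n}$. It identifies each integral as $\bigl(\sum_{h\in T}\eta(h)\bigr)$ times an optimal exclusion risk on $T$. It then uses \eqref{eq:errExpHE} as a black box in both directions.

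Your achievability follows the same chain: sum to max, bottleneck, then max to sum over $\mathcal{T}_\ell$. The difference is that you replace the achievability half of \eqref{eq:errExpHE} by the direct bound $\min_{h\in T}p_h^{\otimes n}\le\prod_{h\in T}(p_h^{\otimes n})^{s(h)}$ plus tensorization. This yields the explicit bound $R_n\le L|\mathcal{T}_\ell|e^{-nC^\star}$ for every $n$.

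Your converse is genuinely different. Instead of the pointwise inequality $R_n\ge mC_n'$, you map an arbitrary rule $f$ to the exclusion rule $g\circ f$ on $T$ via a selector $g(a)\in T\cap S_\ell(a)$. This uses only the fact that an incompatible set meets every regret support, which is the easy half of the bottleneck duality. It needs neither the bottleneck theorem nor the Bayes-rule formula. It still relies on the converse half of \eqref{eq:errExpHE}, which is the nontrivial part of the exclusion result.

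What each route buys:
\begin{itemize}
\item The paper's sandwich gives a stronger statement: $R_n$ is within $n$-independent constant factors of the optimal risk of the hardest exclusion subproblem. So finer (e.g.\ prefactor) asymptotics for exclusion would transfer directly.
\item Your Chernoff step gives up that tightness in exchange for a self-contained, non-asymptotic upper bound.
\item Your converse has a transparent operational meaning.
\end{itemize}

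You also justify $C^\star>0$ explicitly via $|T|\ge 2$ and distinctness of the $P_h$. The paper leaves this implicit in its appeal to \eqref{eq:errExpHE}.
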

A crucial observation is that the exponent depends only on the structure of the regret support hypergraph. In particular, it is independent of both the prior distribution $\eta$ and the specific values of the loss function, depending only on which state-action pairs incur positive regret.

\subsection{Recovery of Previous Results}
Theorem~\ref{thm:MainThmErrExp} recovers the exponent characterizations for hypothesis testing \cite{kanayaAsymptoticsPosteriorEntropy1995} and hypothesis exclusion \cite{mishraOptimalErrorExponents2024} given in  \eqref{eq:errExpHT} and \eqref{eq:errExpHE}, respectively. Specifically,
\begin{itemize}
    \item For hypothesis testing, every subset of $\mathcal{H}$ with cardinality at least two is $\ell_{\HT(\mathcal{H})}$-incompatible. 
    Hence, $\mathcal{T}_\ell=\binom{\mathcal{H}}{2}$, which recovers \eqref{eq:errExpHT}.
    \item For hypothesis exclusion, $\mathcal{H}$ is the unique  $\ell_{\HE(\mathcal{H})}$-incompatible set, i.e., $\mathcal{T}_\ell=\{\mathcal{H}\}$. Indeed, for any proper subset $T\subsetneq\mathcal{H}$, there exists an element $h\in\mathcal{H}\setminus T$, and the action $a=h$ incurs zero regret for every state in $T$. Therefore, Theorem~\ref{thm:MainThmErrExp} recovers \eqref{eq:errExpHE}.
\end{itemize}

\subsection{Application: List Hypothesis Testing}
As an interpolation between hypothesis testing and hypothesis exclusion, consider \textbf{list hypothesis testing} \cite{asadikangarshahiMinimumProbabilityError2023}. In this problem, the decision rule outputs a list of $m$ candidate states rather than a single state, where $1\le m<|\mathcal H|$. The objective is to minimize the probability that the true state is not contained in the output list. This can be formulated as the decision problem with
\begin{align}
    \mathcal{A}=\binom{\mathcal{H}}{m}, \quad \ell(h,a)=\mathbbm{1}(h \notin a).
\end{align}
The conventional hypothesis testing problem corresponds to the case $m=1$. At the other extreme, hypothesis exclusion corresponds to $m=|\mathcal H|-1$, since selecting a single unlikely state is equivalent to outputting the complementary list of $|\mathcal H|-1$ candidate states. Note that $\ell^*\equiv 0$ and hence $\ell^{\reg}=\ell$.

For this problem, a subset $T\subset\mathcal H$ is $\ell$-incompatible if and only if $|T|\ge m+1$. Therefore, $\tilde{\mathcal T}_\ell$ is the collection of all subsets of $\mathcal H$ with cardinality at least $m+1$, and hence  $\mathcal{T}_\ell=\binom{\mathcal{H}}{m+1}$. Applying Theorem~\ref{thm:MainThmErrExp}, we obtain
\begin{align}
    \lim_{n \rightarrow \infty} -\frac{1}{n} \log R_n(P, \eta, \ell)
        = \min_{T \in \binom{\mathcal{H}}{m+1}} C((P_h)_{h \in T}).
\end{align}

\section{Proof}\label{sec:proofMainResult}
In this section, we prove Theorem~\ref{thm:MainThmErrExp}. The key idea is to relate the optimal regret to the optimal risk of hypothesis exclusion on an $\ell$-incompatible subset of hypotheses through the bottleneck theorem (Theorem~\ref{thm:bottleneckThm}). We now proceed with the formal argument.
\begin{IEEEproof}[Proof of Theorem~\ref{thm:MainThmErrExp}]
    For simplicity, let us write $R_n:=R_n(P, \eta,\ell^\reg)$.
    Also, let $\mu$ be a measure on $\mathcal{X}$ dominating all $P_h$, and $p_h=\frac{dP_h}{d\mu}$.
    Moreover, for $h \in \mathcal{H}$ and $x^n \in \mathcal{X}^n$, let
    \begin{align}
        p(h, x^n) = \eta(h) p_h^{\otimes n} (x^n).
    \end{align}
    Then, we have
    \begin{multline}
        \mathbb{E}\left[\ell^\reg(H, f(X^n))\right] \\
        = \int \sum_{h \in \mathcal{H}} p(h, x^n)  \ell^\reg(h, f(x^n))  d\mu^{\otimes n}(x^n).
    \end{multline}
    From this, an optimal decision rule can be chosen to satisfy
    \begin{align}
        f(x^n) \in \argmin_{a \in \mathcal{A}} \sum_{h \in \mathcal{H}} p(h, x^n)  \ell^\reg(h, a),
    \end{align}
    and the optimal regret is given by
    \begin{multline}
        R_n = \int \left(\min_{a\in\mathcal{A}}\sum_{h \in \mathcal{H}} p(h, x^n) \ell^\reg(h,a)\right) d\mu^{\otimes n}(x^n).
    \end{multline}
    Let us define
    \begin{align}
        m &= \min\{\ell^\reg(h,a):a \in\mathcal{A},\,\, h \in S_\ell(a)\},\\
        M &= \max\{\ell^\reg(h,a):a \in\mathcal{A},\,\, h \in S_\ell(a)\}.
    \end{align}
    By the assumption that $\mathcal{H}$ is $\ell$-incompatible, each $S_\ell(a)$ is non-empty, and thus $m,M$ are well-defined and $M \geq m > 0$.
    Then, we have
    \begin{align}
        m C_n \leq R_n \leq M C_n,
    \end{align}
    where
    \begin{align}
        C_n = \int \left(\min_{a\in\mathcal{A}}\sum_{h \in S_{\ell}(a)} p(h, x^n)\right) d\mu^{\otimes n}(x^n).
    \end{align}
    Note that
    \begin{align}
        \sum_{h \in S_{\ell}(a)} p(h, x^n) &\geq \max_{h\in S_{\ell}(a)} p(h, x^n),\\
        \sum_{h \in S_{\ell}(a)} p(h, x^n) &\leq |\mathcal{H}| \max_{h\in S_{\ell}(a)} p(h, x^n).
    \end{align}
    Thus, we have
    \begin{align}
        m C'_n \leq R_n \leq |\mathcal{H}|M C'_n, \label{eq:HEstErrBound}
    \end{align}
    where
    \begin{align}
        C'_n = \int \left(\min_{a\in\mathcal{A}}\max_{h\in S_{\ell}(a)} p(h, x^n)\right) d\mu^{\otimes n}(x^n).
    \end{align}
    Now, by applying the bottleneck theorem (Theorem~\ref{thm:bottleneckThm}), we obtain
    \begin{align}
        \min_{a\in\mathcal{A}}\max_{h\in S_{\ell}(a)} p(h, x^n) = \max_{T\in\mathcal{T}_\ell}\min_{h\in T} p(h, x^n).
    \end{align}
    Hence, we have
    \begin{align}
        C'_n = \int \left(\max_{T\in\mathcal{T}_\ell}\min_{h\in T} p(h, x^n)\right) d\mu^{\otimes n}(x^n).
    \end{align}
    Note that
    \begin{align}
        C_n' \geq \max_{T\in\mathcal{T}_\ell} \int \left(\min_{h\in T} p(h,x^n) \right) d\mu^{\otimes n}(x^n)
    \end{align}
    and
    \begin{align}
        C_n' &\leq \sum_{T \in \mathcal{T}_\ell} \int \left(\min_{h\in T} p(h,x^n)\right) d\mu^{\otimes n}(x^n)\\
        &\leq 2^{|\mathcal{H}|} \max_{T\in\mathcal{T}_\ell} \int \left(\min_{h\in T} p(h,x^n) \right) d\mu^{\otimes n}(x^n).
    \end{align}
    Thus, we have
    \begin{align}
        m C_n'' \leq R_n \leq |\mathcal{H}|2^{|\mathcal{H}|} M C_n'', \label{eq:riskBoundByHE}
    \end{align}
    where
    \begin{align}
        C_n'' = \max_{T\in\mathcal{T}_\ell} \int \left(\min_{h\in T} p(h,x^n) \right) d\mu^{\otimes n}(x^n).
    \end{align}
    Now, we can observe that
    \begin{multline}
        \int \left(\min_{h\in T} p(h,x^n) \right) d\mu^{\otimes n}(x^n) \\
        = \left(\sum_{h' \in T} \eta(h')\right) R_n(P|_T, \eta|_T, \ell_{\HE(T)}),
    \end{multline}
    where 
    \begin{align}
        P|_T = (P_h)_{h \in T}, \quad \eta|_T(h) = \frac{\eta(h)}{\sum_{h' \in T} \eta(h')}.
    \end{align}
    From the previous result \cite{mishraOptimalErrorExponents2024} on hypothesis exclusion stated in \eqref{eq:errExpHE}, $R_n(P|_T, \eta|_T, \ell_{\HE(T)})$ converges to $0$ exponentially fast, with the exponent
    \begin{align}
        \lim_{n \rightarrow \infty} -\frac{1}{n}\log R_n(P|_T, \eta|_T, \ell_{\HE(T)})
        = C(\{P_h\}_{h \in T}).
    \end{align}
    Thus, $C_n''$ converges to $0$ with
    \begin{align}
        \lim_{n \rightarrow \infty} -\frac{1}{n}\log C_n'' = \min_{T \in \mathcal{T}_\ell} C(\{P_h\}_{h \in T}).
    \end{align}
    Consequently, from \eqref{eq:riskBoundByHE}, $R_n$ also converges to $0$ with
    \begin{align}
        \lim_{n \rightarrow \infty} -\frac{1}{n}\log R_n = \min_{T \in \mathcal{T}_\ell} C(\{P_h\}_{h \in T}).
    \end{align}
\end{IEEEproof}

\section{Conclusion}

We characterized the exact exponential decay rate of the optimal Bayes regret for finite-state finite-action Bayesian decision problems with conditionally i.i.d. observations. We showed that the optimal regret always decays exponentially fast and identified its exponent as the minimum multivariate Chernoff information over the minimal $\ell$-incompatible subsets of states, equivalently, the minimal transversals of the regret support hypergraph induced by the loss function. This characterization unifies the classical pairwise-minimum Chernoff exponent for symmetric multiple hypothesis testing and the multivariate Chernoff exponent for hypothesis exclusion, while also yielding an exact exponent characterization for list hypothesis testing.

An interesting direction for future work is to extend the result to quantum statistical decision problems, where each hypothesis is represented by a quantum state. A major challenge is that, unlike quantum hypothesis testing, whose exponent is characterized by the pairwise-minimum quantum Chernoff information \cite{liDiscriminatingQuantumStates2016}, no general single-letter characterization is currently known for the quantum hypothesis exclusion exponent. Another promising direction is to study the asymptotic behavior of more general information measures between states and observations. This would extend recent works \cite{wuStrongAsymptoticComposition2024, taylorAsymptoticBehaviorInformation2026}, which establish exponential convergence governed by the pairwise-minimum Chernoff information under certain structural conditions on the information measure.

\bibliographystyle{ieeetr}
\bibliography{IEEEabrv, refs}

\end{document}